\theoremstyle{plain}%
    \newtheorem{theorem}{Theorem}[section]
    \newtheorem{lemma}[theorem]{Lemma}
    {
    \theorembodyfont{\upshape}
        \newtheorem{remark}[theorem]{Remark}
        \newtheorem{definition}[theorem]{Definition}
        
        \newtheorem{example}[theorem]{Example}
    }
\newenvironment{proof}
    {
    \noindent{\itshape Proof:}\rm \\ }
    {\\ \rightline{$\square$}}
 \newcommand{\N}{\mathbb{N}}
\begin{document}

\title{A mathematical model for networks with structures in the mesoscale}

\author{
    Regino Criado
    \thanks{e-mail: {\tt regino.criado@urjc.es}},
    Julio Flores
    \thanks{e-mail: {\tt julio.flores@urjc.es}},
    Alejandro Garc\'{\i}a del Amo
    \thanks{e-mail: {\tt alejandro.garciadelamo@urjc.es}}
    \\
    Jes\'{u}s G\'{o}mez-Garde\~{n}es
    \thanks{e-mail: {\tt jesus.gomez.gardenes@urjc.es}}\enspace and
    Miguel Romance
    \thanks{e-mail: {\tt miguel.romance@urjc.es}}
    \\ Department of Applied Mathematics\\
      Rey Juan Carlos University\\
      28933 M\'ostoles (Madrid), Spain
    }


\date{}

\maketitle

\begin{abstract}

The new concept of multilevel network is introduced in order to embody some topological properties of complex systems with  structures in the mesoscale which are not completely captured by the classical models.  This new model, which generalizes the hyper-network and hyper-structure models, fits perfectly with several real-life complex systems, including social and public transportation networks. We present an analysis of the structural properties of the multilevel network, including the clustering and the metric structures. Some analytical relationships amongst the efficiency and clustering coefficient of this new model and the corresponding parameters of the underlying network are obtained. Finally some random models for multilevel networks are given to illustrate how different multilevel structures can produce similar underlying networks and therefore that the mesoscale structure should be taken into account in many applications.

\end{abstract}

%
%

\section{Introduction}

During the last years the scientific community has shown that many relevant properties of communication systems, social networks and other biological and technological systems may be described in terms of complex network properties, including structural and dynamical properties and the interplay between them (see, for example, \cite{Albert}, \cite{Barabasi}, \cite{Boccaletti}, \cite{newman} or \cite{strogatz} and the references therein). The emergence of these appealing results has forged the Complex Networks Analysis as an attractive and multidisciplinary branch of research that includes topics of sociology (social networks), biological sciences (including metabolic and protein networks, genetic regulatory network and many others), neuro-sciences (neural interaction networks, cortical networks) engineering (phone call networks, computers in telecommunication networks, Internet, the World Wide Web), theoretical physics and applied mathematics.

The working objects of the Complex Networks Analysis are graphs (which are called {\sl complex networks})  which occasionally prove unable to capture the details present in some real-life problems. One example of this situation can be found in sociology. Social networks analysis is used in the social and behavioral sciences, as well as in economics, marketing, and industrial engineering (\cite{Wasserman}), but some questions related to the structure of social networks have been not understood properly. From the fact that a social network can be understood as a set of people or groups of people with some pattern of contacts or interactions between them (\cite{Scott},\cite{Wasserman}) - think of the {\sf Facebook} network - it might seem that all the connections or social relationships between the members of that network take place at the same level. But the real situation is far from this. The real relationships amongst the members of a social network take place (mostly) inside of different groups (levels) and therefore it cannot be modeled properly if only the natural local-scale point of view used in classic complex network models is taken into account. This phenomenon is related to the fact that several real-life complex systems have a community structure \cite{Girvan} that takes place in a scale that lies between the local and the global one and that gives information about the structure-functionality relationship. The problem of finding and analyzing the community structure of a complex network is one of the {\sl hot spots} of the complex network analysis, since these meso-scale structures reveal valuable information about roles of groups of nodes \cite{Newman2}.

The main goal of this paper is to present the concept of multilevel network as a new model useful to analyze with detail complex systems that has some structures in the mesoscale level. As we pointed out before, this kind of networks are present in many real-life situations, not only in social networks, but also in communication networks and all the other examples that exhibit a community structure. The introduction of a sharp mathematical model that fits this new structures may prove  crucial to properly analyze the dynamics that take place in these complex systems which are far from being completely understood. An example of these phenomena is the dissemination of culture in social networks by using the Axelrod Model that has been recently studied by using meso-scale networks \cite{Guerra}. Yet the study of other meso-scaled complex networks and their dynamics is one of the scientific challenges of the nowadays complex network analysis. There are other mathematical models that consider non-local structures such as the hypergraphs (or hypernetworks) \cite{Berge1} or the hyper-structures \cite{Criado} but they are not able to combine the local scale with the global and the mesoscale structures of the system.  For example, if we want to model how a rumor is spread within a social network, it is necessary to have in mind not only that different groups are linked only through some of their members but also that two people who know the same person don't have necessarily to know each other, and, in fact, these two people may belong to different groups or levels. If we try to model this situation with hyper-networks, then we will only be taking into account the social groups and not the actual relationships between their members; in contrast if we use the hyper-structure model, then we won't possibly  know to which social group each contact between two nodes belongs.

After multilevel networks are defined, we present a structural analysis of these  objects in terms of the clustering coefficient and the metric structure which must be properly defined in this setting. We will compare these parameters with the corresponding ones for the underlying projection networks and we will present the analytical relationships between them. Despite the fact that there are sharp connections between the structural properties of the multilevel network and the properties of their projection and layer (or slice) networks, we will present some examples that exhibit the deep differences between the classic complex network approach and this new point of view.

Finally, in section 3, we will give several growing models to produce multilevel random networks, inspired by the Barab\'asi-Albert preferential attachment model of complex networks \cite{Albert1}. These methods are inspired by the bipartite networks models introduced by Ramasco {\em et al.} \cite{ramasco} and by M.\,Peltom\"{a}ki and M.\,Alava \cite{peltomaki}, but in this case we have to take into account the meso-scale nature of the multilevel structure. The use of growing random models is natural since many of the  examples in real life that can be modeled by multilevel networks, such as social or transportation networks, are dynamic models whose structure grow in time by addition of new nodes, links and layers. By using these models we perform several random testings that show that we can get very different multilevel networks (i.e. with very different meso-scale structure) which still have similar local and global structures when considered as classic complex network. For example, we exhibit a multilevel network that at the meso-scale is like an  Erd\H{o}s-R\'enyi random network, but it has a  scale-free structure if we consider it as a complex network. These random testings reinforce the idea that in many cases we must take into account the meso-scale structure in order to get deep understanding of the structure and dynamics of many complex systems.

\section{Multilevel Networks: A Structural Analysis}

As we pointed out in the previous section, a sort of naive approach to complex systems with structures in the mesoscale could suggest that hyper-networks and hyper-structures fit perfectly to these real-life systems. The key-point that makes these mathematical model not to be the best solution for systems with structures in the mesoscale has to do with the fact that both hyper-networks and hyper-structures are node-based models, while many real systems combine a node-based point of view with a link-based perspective. For example, if we have a look again at a social network with structures in the mesoscale (such as several social groups within its structure), when we consider a relationship between two members of one social group(or several), we have to take into account not only the social groups that hold the members, but also the social group that holds the relationship itself, i.e., if, for example, there is a relationship between two people that share the same working group and the same sport group, we have to highlight if the relationship is due to sharing the same group of work or it has sport nature. This fact is not particular of social networks with structures in the mesoscale and a similar situation occurs, for example, in public transportation systems, where a link between two stations belonging to several transport lines can occur as a part of different lines.

In order to avoid this node-based nature of hyper-networks and hyper-structures, we propose to introduce the following concept that combines the node-based with the link-based perspective:

\begin{definition}
Let $G=(X,E)$ be a network. A {\sl multilevel network} is a triple $M=(X,E,\mathcal{S})$, where $\mathcal{S}=\{S_1,\ldots,S_p\}$ is a family of subgraphs $S_q=(X_q,E_q)$ of $G$ such that
\[
G=\bigcup_{q=1}^p S_q,
\]
i.e. $X=X_1\cup\cdots\cup X_p$ and $E=E_1\cup \cdots\cup E_p$. The network $G$ is the {\sl projection network} of $M$ and each subgraph $S_j\in \mathcal{S}$ is called a {\sl slice} of the multilevel network $M$.
\end{definition}

This mathematical model perfectly suits social systems (as well as other complex systems) with structures in the mesoscale, since each  social group can be understood as a {\sl slice graph} in a {\sl multilevel network}; thus we are simultaneously taking into account the nature of the links (i.e. relationships) and the nodes involved.

It is easy to check that this new mathematical object extends both the classic complex network model and also the hyper-network model \cite{Berge1}. Let us point it out very briefly. On the one hand, if $G=(X,E)$ is a network, then it can be easily seen as a multilevel network by considering $M=(X,E,\mathcal{S})$, where $\mathcal{S}=\{G\}$.

On the other hand if $\mathcal{H}=(X,H)$ is an hyper-network (i.e. $X$ is a non-empty set of nodes and $H=\{H_1,\ldots, H_p\}$ is a family of non-empty subsets of $X$, each of them called and {\sl hyper-link} of $\mathcal{H}$), then it can be seen as the multilevel set $M=(X,E_\mathcal{H},\mathcal{S}_\mathcal{H})$, given by
\[
\mathcal{S}_\mathcal{H}=\{K_{H_1},\ldots, K_{H_p}\},
\]
where $K_{H_j}$ is the complete network obtained by linking every pair of nodes of $H_j$ and
\[
E_\mathcal{H}=\bigcup_{r=1}^p K_{H_r}.
\]
By using similar arguments we can show that every hyper-structure \cite{Criado} can be understood as a particular multilevel network, by considering one slice network for each hyper-link in the hyper-structure and each slice graph being a set of isolated nodes.

Once we have introduced this new and novel mathematical object, we have to give suitable structural parameters to analyze it. We can give natural extensions of many of the usual tools of the complex networks' analysis, such as the clustering coefficient, an adjacency matrix/tensor, a natural network representation as a tripartite network or a geodesic structure, among many others.

\subsection{Clustering of multilevel networks}

In this subsection we extend the graph clustering coefficient introduced by Watts and Strogatz in \cite{wattsstro} to multilevel networks and we establish some relationships between the clustering coefficient of a multilevel network, the clustering coefficient of its slices and the clustering coefficient of its projection network. Recall that given a network $G = (X,E)$ the clustering coefficient of a given node $i$ is defined as
\[
c_G(i)=\frac{\text{number of links between the neighbors of $i$}}{\text{largest possible number of links between the neighbors
of $i$}} \, .
\]
We take $a/0$ to be $0$ for every number $a$. Thus, if we think of three people  $i$,$j$ and~$k$ with mutual relations between $i$ and $j$ as well as between $i$ and~$k$, the clustering coefficient of $i$  is supposed to represent the likeliness that $j$ and $k$ are also related. The clustering coefficient of $G$ is usually defined as the average of the clustering coefficients of all nodes.

In order to give a definition os the clustering coefficient of a node in a multilevel network $M=(X,E,\mathcal{S})$, where $\mathcal{S} = \lbrace S_1, \dots , S_p\rbrace$ and $S_q=(X_q,E_q)$ for every $1\le q\le p$, we need to introduce some notation. For every node $i\in X$ call $\mathcal{N}(i)$ the set of all neighbors of the node $i$ in the projection graph $G$.  For every $q=1, \dots , p$ we will denote $\mathcal{N}_q(i)=\mathcal{N}(i)\cap X_q$ and $\overline{G_q}(i)$ the subgraph of the slice $S_q$ generated by $\mathcal{N}_q(i)$, i.e. $\overline{G_q}(i)=(\mathcal{N}_q(i), \overline{E_q}(i))$, where
\[
\overline{E_q}(i)=\big\{\{k,j\}\in E_q \vert \, k,j\in \mathcal{N}_q(i)\big\}.
\]
Similarly we will define $\overline{G}(i)$ as  the subgraph of the projection network $G$ generated by $\mathcal{N}(i)$, i.e. $\overline{G}(i)=(\mathcal{N}(i), \overline{E}(i))$, where
\[
\overline{E}(i)=\big\{\{k,j\}\in E \vert \, k,j\in \mathcal{N}(i)\big\}.
\]
In addition the complete graph generated by $\mathcal{N}_q(i)$ will be denoted as usual by $K_{\mathcal{N}_q(i)}.$
Note that the largest possible number of links between the nodes of $\mathcal{N}_q(i)$ is precisely the number of links in $K_{\mathcal{N}_q(i)}$, i.e.
\[
|\overline{E_q}(i)|\le  \frac{|\mathcal{N}_q(i)|(|\mathcal{N}_q(i)|-1)}{2},
\]
where $|\mathcal{N}_q(i)|$ stands as above for the cardinality of $\mathcal{N}_q(i)$. With the previous notation we give the following

\begin{definition}
Let  $M=(X,E,\mathcal{S})$ be a  multilevel network. The {\sl clustering coefficient of a given node} $i$ in $M$ is defined as
\[
c_M(i)=\frac{\displaystyle 2\sum_{q=1}^p |\overline{E_q}(i)|}{\displaystyle\sum_{q=1}^p|\mathcal{N}_q(i)|(|\mathcal{N}_q(i)|-1)} \, .
\]
The {\sl clustering coefficient} of $M$ is the average of all $c_M(i)$ over the set of nodes.
\end{definition}

Notice that the clustering coefficient might have been defined differently. For instance we might have  considered the clustering coefficient of each $S_q$, say $c_q(i)$,  and then take the average over the set $\{S_q\}_{q=1}^p$. However we have opted for a definition that embodies the following idea: it is possible for a given node $i$ to have two neighbors $k,j$ with $\{i,k\}\in E_q$, $\{i,j\}\in E_{q'}$ with $q\neq q'$ and such that there is a link $\{j,k\}\in E_{q''}$ with $q''\neq q,q'$. This is a natural situation when we think of  social networks; indeed, one person $i$ knows $j$ from the aerobic class, $i$ also knows $k$ from her reading club while $j$ and $k$ know each other from the supermarket. This sort of situation produces more links in the subgraphs $\overline{G_q}(i)$ defined above than those already present in the slice $S_q$. Taking the alternative definition based on the clustering coefficients of the slices would not help to discriminate such situations; in contrast the definition proposed does. The following example should clarify this situation and justifies the restrictions of theorem \ref{multiclustering}.

\begin{example}\label{slide-ex}
If we consider the multilevel network $M=(X,E,\mathcal{S})$ where $X=\{1,2,3,4\}$, $S=\{S_1,S_2,S_3\}$ and $G$ is as in the next figure, it is easy to check that $c_G(i)=1$ for all $i \in X$ but however $c_{S_{q}}(i)=0$ for all $i \in X$ and $q=1,2,3$.
\begin{figure}[h!]
\begin{center}
\includegraphics[width=0.6\textwidth]{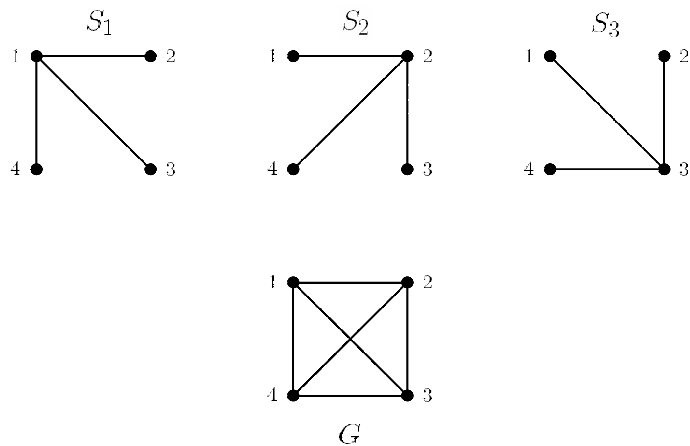}
\end{center}
\caption[]{An example of the extreme difference between clustering coefficients}
\end{figure}
\end{example}

As we have seen in the last example, the clustering coefficient of the projected graph $G$ and the clustering coefficient of each slice of $M$ may be very different. However  we are interested in establishing relations between the clustering coefficient of the multilevel network, $c_M(i)$, and the clustering coefficient in the projected network, $c_G(i)$, if possible. That is the target of the next theorem:

\begin{theorem}\label{multiclustering}
Let  $M=(X,E,\mathcal{S})$ be a  multilevel network and $i\in X$. If we consider
$A(i)=\{q\in\{1,\dots,p\} \, \vert \, |\mathcal{N}_q(i)|<2\}$ and we denote $\theta(i)=|A(i)|$, then
\begin{equation}\label{E:multiclustering}
\frac{1}{p-\theta(i)}\,c_G(i)\le c_M(i)\le p\,c_G(i)\left(1+(p-1)\left(4+ \frac {\theta(i)}{p-\theta(i)}\right)\right).
\end{equation}
Furthermore, if $M$ is homogeneous, then
\begin{equation}\label{E:multiclustering_homogeneo}
\frac{1}{p-\theta(i)}\, c_G(i)\le c_M(i)\le c_G(i).
\end{equation}
\end{theorem}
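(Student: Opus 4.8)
The plan is to obtain both chains of inequalities in \eqref{E:multiclustering} and \eqref{E:multiclustering_homogeneo} from two elementary containments among the edge sets $\overline{E_q}(i)$, together with one purely arithmetic estimate on the neighbourhood sizes. Throughout set $n=|\mathcal{N}(i)|$, $n_q=|\mathcal{N}_q(i)|$, $\theta=\theta(i)$ and $p'=p-\theta$; one may assume $\theta<p$, since otherwise the expression $\theta(i)/(p-\theta(i))$ appearing in the statement is undefined (equivalently $\sum_q n_q(n_q-1)=0$, so $c_M(i)=0$), and note that $\theta<p$ forces some $n_q\ge2$, hence $n\ge2$. Since $E_q\subseteq E$ and $\mathcal{N}_q(i)\subseteq\mathcal{N}(i)$, one has $\overline{E_q}(i)\subseteq\overline{E}(i)$ for every $q$; since $E=\bigcup_q E_q$ and every edge of $E_q$ has both endpoints in $X_q$, one also has $\bigcup_q\overline{E_q}(i)=\overline{E}(i)$; and for $q\in A(i)$ one has $n_q\le1$, so $n_q(n_q-1)=0$ and $\overline{E_q}(i)=\emptyset$, i.e.\ all sums over $q$ below effectively run over the $p'$ indices outside $A(i)$. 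These two containments give immediately
\[
|\overline{E}(i)|\;\le\;\sum_{q=1}^{p}|\overline{E_q}(i)|\;\le\;p\,|\overline{E}(i)| .
\]

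For the left-hand inequality in \eqref{E:multiclustering} I would combine the lower estimate above with the trivial bound $\sum_q n_q(n_q-1)\le p'\,n(n-1)$ (there are $p'$ nonzero summands, each at most $n(n-1)$ since $n_q\le n$ and $x\mapsto x(x-1)$ is nondecreasing on the nonnegative integers) and divide, getting $c_M(i)\ge \frac{2|\overline{E}(i)|}{p'\,n(n-1)}=\frac{c_G(i)}{p-\theta(i)}$. For the homogeneous refinement \eqref{E:multiclustering_homogeneo} I would use that homogeneity (every slice having vertex set $X$) forces $\mathcal{N}_q(i)=\mathcal{N}(i)$ for every slice $q$, whence $n_q=n$ for all $q$, $\theta(i)=0$, and $\sum_q n_q(n_q-1)=p\,n(n-1)$ exactly; dividing the displayed double inequality by this single number then yields $\tfrac1p c_G(i)\le c_M(i)\le c_G(i)$, as required.

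What is left is the right-hand inequality in \eqref{E:multiclustering}. If $\overline{E}(i)=\emptyset$ then $c_M(i)=0$ and nothing is needed, so assume $|\overline{E}(i)|\ge1$. By the upper estimate $\sum_q|\overline{E_q}(i)|\le p|\overline{E}(i)|$, it suffices to prove the arithmetic inequality
\[
\sum_{q=1}^{p}n_q(n_q-1)\;\ge\;\frac{n(n-1)}{D},\qquad\text{where}\quad D:=1+(p-1)\Bigl(4+\tfrac{\theta}{p-\theta}\Bigr),
\]
for then $c_M(i)\le \frac{2p\,|\overline{E}(i)|}{\sum_q n_q(n_q-1)}\le D\,p\cdot\frac{2|\overline{E}(i)|}{n(n-1)}=D\,p\,c_G(i)$, which is exactly the bound claimed. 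The only structural input I would use is that the $p'$ slices outside $A(i)$ each have $n_q\ge2$ and jointly cover all of $\mathcal{N}(i)$ except at most the $\theta$ neighbours that lie only in slices of $A(i)$; thus, writing $T:=\sum_{q\notin A(i)}n_q$, one has $T\ge\max\{n-\theta,\,2p'\}$. Cauchy--Schwarz gives $\sum_{q\notin A(i)}n_q^2\ge T^2/p'$, hence $\sum_q n_q(n_q-1)=\sum_{q\notin A(i)}(n_q^2-n_q)\ge T^2/p'-T$, a quantity increasing in $T$ for $T\ge2p'$. One then substitutes $T\ge\max\{n-\theta,2p'\}$ and splits into two cases: for $n\le 2p-\theta$ use $\sum_q n_q(n_q-1)\ge2p'$ together with $n(n-1)\le(2p-\theta)(2p-\theta-1)$; for $n> 2p-\theta$ use the bound $T^2/p'-T$ evaluated at $T=n-\theta$. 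In each case the desired inequality collapses to an elementary polynomial inequality in $p$ and $\theta$ --- the first case to $4p(p-1)\ge\theta(2p+\theta-3)$, the second to an analogous estimate --- and these are verified directly using $D\ge 4p-3$ and $0\le\theta\le p-1$.

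The one genuinely delicate point is this final arithmetic lemma: the covering deficit $\theta$ must be tracked carefully through the Cauchy--Schwarz step so that the constant emerges as precisely $1+(p-1)\bigl(4+\theta/(p-\theta)\bigr)$ rather than something weaker, and the ensuing two-case polynomial check, while routine, has to be carried out in full. By contrast, the lower bound in \eqref{E:multiclustering} and the sharp two-sided estimate \eqref{E:multiclustering_homogeneo} are one-line consequences of the containments $\overline{E_q}(i)\subseteq\overline{E}(i)$, $\bigcup_q\overline{E_q}(i)=\overline{E}(i)$, and the crude bounds on $\sum_q n_q(n_q-1)$.
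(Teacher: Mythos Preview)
Your treatment of the lower bound in \eqref{E:multiclustering} and of the homogeneous case \eqref{E:multiclustering_homogeneo} coincides with the paper's: both use the containments $\overline{E_q}(i)\subseteq\overline{E}(i)$, $\bigcup_q\overline{E_q}(i)=\overline{E}(i)$, together with the trivial bound $\sum_q n_q(n_q-1)\le(p-\theta)\,n(n-1)$, and (for homogeneity) $\mathcal{N}_q(i)=\mathcal{N}(i)$ giving the exact equality $\sum_q n_q(n_q-1)=p\,n(n-1)$.

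For the upper bound in \eqref{E:multiclustering}, both arguments first reduce to the arithmetic inequality $\sum_q n_q(n_q-1)\ge n(n-1)/D$, but then diverge. The paper bounds $n(n-1)$ from above by the covering inequality
\[
n(n-1)\;\le\;\sum_q n_q(n_q-1)+2\sum_{k<j}n_k n_j,
\]
splits each cross term as $n_k n_j=(n_k-1)n_j+n_j$, orders the $n_q$ increasingly, and controls the two resulting sums separately; the constant $D$ then drops out of the algebra directly, with no case distinction. Your route instead bounds $\sum_q n_q(n_q-1)$ from below via Cauchy--Schwarz, $\sum_{q\notin A(i)}n_q^2\ge T^2/p'$ with $T\ge\max\{n-\theta,2p'\}$, and splits on $n\le2p-\theta$ versus $n>2p-\theta$. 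The paper's approach delivers the exact constant without cases; yours is more mechanical but defers the work to a two-case polynomial check that you explicitly leave open. That check does go through: one computes $2p'D-(2p'+\theta)(2p'+\theta-1)=4p'(p'-1)+6p'\theta+\theta(\theta-1)\ge0$, which settles the first case, and for $m:=n-\theta>2p'$ the quadratic $g(m):=Dm(m-p')-p'(m+\theta)(m+\theta-1)$ satisfies $g(2p')\ge0$ by the same identity and $g'(2p')=p'\bigl(8(p'-1)+13\theta+3\theta(\theta-1)/p'\bigr)\ge0$, so convexity gives $g(m)\ge0$ for all $m\ge2p'$. Hence your plan is sound, though the ``routine'' step is a genuine computation rather than an inspection.
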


\begin{proof}
Let us start with the left-hand inequality in (\ref{E:multiclustering}). Using that $\mathcal{N}_q(i)\subseteq \mathcal{N}(i)$ for every $1\le q\le p$ and $\displaystyle \overline{E}(i)=\cup_{q=1}^p\overline{E_q}(i)$,  we get that
\[
\begin{split}
c_G(i)
&=\frac{2|\overline{E}(i)|} {|\mathcal{N}(i)|(|\mathcal{N}(i)|-1)}
 =\frac{\sum_{q=1}^p |\mathcal{N}_q(i)|(|\mathcal{N}_q(i)|-1)} {|\mathcal{N}(i)|(|\mathcal{N}(i)|-1)}
  \cdot\frac{2|\overline{E}(i)|} {\sum_{q=1}^p|\mathcal{N}_q(i)|(|\mathcal{N}_q(i)|-1)}\\
&\le (p-\theta(i))\,\frac{2|\overline{E}(i)|} {\sum_{q=1}^p|\mathcal{N}_q(i)|(|\mathcal{N}_q(i)|-1)}\le
(p-\theta(i))\, \frac{2\sum_{q=1}^p|\overline{E_q}(i)|}{\sum_{q=1}^p|\mathcal{N}_q(i)|(|\mathcal{N}_q(i)|-1)}\\
&=(p-\theta(i))\,c_M(i) \, ,
\end{split}
\]
since
\[
\sum_{q=1}^p |\mathcal{N}_q(i)|(|\mathcal{N}_q(i)|-1) =
\sum_{q \notin A(i)} |\mathcal{N}_q(i)|(|\mathcal{N}_q(i)|-1) \leq
(p-\theta(i))(|\mathcal{N}(i)|(|\mathcal{N}(i)|-1)) \, .
\]

In order to prove the remaining inequality notice that the number of links in the complete graph built up on the set of nodes $\mathcal N(i)$ never exceeds the sum over $q=1,\dots, p$ of the numbers of links of the complete graphs built up on the sets of nodes $\mathcal{N}_q(i)$ plus all the links possibly obtained by joining  nodes of $\mathcal{N}_k(i)$ to  nodes of~$\mathcal{N}_j(i)$ where $k$ and $j$ are chosen among  all possible pairs $(k,j)$ with $1\le k<j\le p$, i.e.
\begin{equation}\label{E:nodos_capa_proyeccion}
\frac{|\mathcal{N}(i)|(|\mathcal{N}(i)|-1)}{2}\le \frac 12\sum_{q=1}^p
|\mathcal{N}_q(i)|(|\mathcal{N}_q(i)|-1)+\sum_{k<j}|\mathcal{N}_k(i)||\mathcal{N}_j(i)|,
\end{equation}
where the last sum has $\binom{p}{2}$ terms. Notice also that we can always assume, by rearranging if necessary, that $|\mathcal{N}_q(i)|\le|\mathcal{N}_{q+1}(i)|$ for $1 \leq q \leq p - 1$. In addition to this, observe finally that
\begin{equation}\label{E:aristas_capa_proyeccion}
\begin{split}
\sum_{k<j}|\mathcal{N}_j(i)|\left(|\mathcal{N}_k(i)|-1\right)
&\le \sum_{q=1}^p\big[(|\mathcal{N}_q(i)|-1)|\mathcal{N}_q(i)|(q-1)\big]\\
&\le (p-1)\sum_{q=1}^p (|\mathcal{N}_q(i)|-1)|\mathcal{N}_q(i)|.
\end{split}
\end{equation}
Thus, since $\overline{E_q}(i)\subseteq \overline{E}(i)$ for every $1\le q\le p$ and by using (\ref{E:nodos_capa_proyeccion})
\begin{equation}\label{E:aristascapa_aristastotal}
\begin{split}
c_M(i)
 &=\frac{2\sum_{q=1}^p |\overline{E_q}(i)|} {\sum_{q=1}^p|\mathcal{N}_q(i)|(|\mathcal{N}_q(i)|-1)}
   \le \frac{2p|\overline{E}(i)|}{\sum_{q=1}^p|\mathcal{N}_q(i)|(|\mathcal{N}_q(i)|-1)}\\
 &=p\,c_G(i)\frac{|\mathcal{N}(i)|(|\mathcal{N}(i)|-1)}{\sum_{q=1}^p|\mathcal{N}_q(i)|(|\mathcal{N}_q(i)|-1)}\\
 &=p\,c_G(i)\left(
   1+\frac{2\sum_{k<j}|\mathcal{N}_k(i)||\mathcal{N}_j(i)|}{\sum_{q=1}^p|\mathcal{N}_q(i)|(|\mathcal{N}_q(i)|-1)}\right).
\end{split}
\end{equation}
Notice that if $1\le k<j\le p$, then
\[
|\mathcal{N}_k(i)||\mathcal{N}_j(i)|=\left(|\mathcal{N}_k(i)|-1+1\right)\,|\mathcal{N}_j(i)|\le \left(|\mathcal{N}_k(i)|-1\right)\,|\mathcal{N}_j(i)|+|\mathcal{N}_j(i)|.
\]
Now if we combine the last formula with (\ref{E:aristascapa_aristastotal}), we get that
\[
\begin{split}
c_M(i)
 &\leq p\,c_G(i)\left(
   1+\frac{2\sum_{k<j}|\mathcal{N}_k(i)||\mathcal{N}_j(i)|}{\sum_{q=1}^p|\mathcal{N}_q(i)|(|\mathcal{N}_q(i)|-1)}\right)\\
 &\le p\,c_G(i)\left(1+2(p-1)
   +2\frac{\sum_{q=1}^p(q-1)|\mathcal{N}_q(i)|}{\sum_{q=1}^p|\mathcal{N}_q(i)|\,\left(|\mathcal{N}_q(i)|-1\right)}\right)\\
 &\le p\,c_G(i)\left(1+2(p-1)
   +2\frac{(p-1)\sum_{q=1}^p|\mathcal{N}_q(i)|}{\sum_{q=1}^p|\mathcal{N}_q(i)|\,\left(|\mathcal{N}_q(i)|-1\right)}\right).
\end{split}
\]
It is easy to check that if $q\in A(i)$, then $|\mathcal{N}_q(i)|\,\left(|\mathcal{N}_q(i)|-1\right)=0$, which makes that
\[
\sum_{q=1}^p|\mathcal{N}_q(i)|\,\left(|\mathcal{N}_q(i)|-1\right)= \sum_{q\notin A(i)}|\mathcal{N}_q(i)|\,\left(|\mathcal{N}_q(i)|-1\right)
\ge \sum_{q\notin A(i)}|\mathcal{N}_q(i)|.
\]
Hence
\[
\begin{split}
\frac{\sum_{q=1}^p|\mathcal{N}_q(i)|}{\sum_{q=1}^p|\mathcal{N}_q(i)|\,\left(|\mathcal{N}_q(i)|-1\right)}
 &\le \frac{\sum_{q\notin A(i)}|\mathcal{N}_q(i)|+\sum_{q\in A(i)}|\mathcal{N}_q(i)|}{\sum_{q\notin A(i)}|\mathcal{N}_q(i)|}\\
 &\le 1+ \frac {\theta(i)}{2(p-\theta(i))},
\end{split}
\]
since if $q\in A(i)$, then $|\mathcal{N}_q(i)|\le 1$, while if  $q\notin A(i)$, then $|\mathcal{N}_q(i)|\ge 2$. Finally, if we use the last upper bound, we obtain that
\[
\begin{split}
c_M(i)&\le p\,c_G(i)\left(1+2(p-1)
   +2\frac{(p-1)\sum_{q=1}^p|\mathcal{N}_q(i)|}{\sum_{q=1}^p|\mathcal{N}_q(i)|\,\left(|\mathcal{N}_q(i)|-1\right)}\right)\\
&\le p\,c_G(i)\left(1+ (p-1)\left(4+ \frac {\theta(i)}{p-\theta(i)}\right)\right).
\end{split}
\]

On the other hand, if $M$ is homogeneous, then the same techniques as before shows that $c_G(i)\le (p-\theta(i))c_M(i)$. Since $M$ is homogeneous, then $\mathcal{N}_q(i)=\mathcal{N}(i)$ and $\overline{E_q}(i)\subseteq \overline{E}(i)$ for every $1\le q\le p$. Therefore we conclude that
\[
\begin{split}
c_M(i)
 &=\frac{2\sum_{q=1}^p |\overline{E_q}(i)|} {\sum_{q=1}^p|\mathcal{N}_q(i)|(|\mathcal{N}_q(i)|-1)}\\
 &\le \frac{2p|\overline{E}(i)|}{\sum_{q=1}^p|\mathcal{N}(i)|(|\mathcal{N}(i)|-1)}
   =\frac{2|\overline{E}(i)|}{|\mathcal{N}(i)|(|\mathcal{N}(i)|-1)}=c_G(i).
\end{split}
\]
\end{proof}

\begin{remark}
The definition of $\theta(i)$ shows that the estimation in the previous theorem  gets coarser as the number of slices for which $i$ has strictly less than two neighbors (possibly zero) increases; in a ``normal'' scenario this situation will not occur and the estimation will involve a polynomial of degree two in $p$. Anyway, since $0\le \theta(i)\le p$ and the function $f(x)=x/(p-x)$ is increasing in $[0,p)$, then the last result ensures that
\[
c_M(i)\le g(p)\,c_G(i),
\]
where $g(p)$ is of order $p^\alpha$ with $\alpha\in [2,3]$, but still it is not obvious whether the exponent 2 can be lowered.
\end{remark}

As it was said above there are other possible definitions for the clustering of a node in a multilevel network, based on the clusterings of that node in the slices, and then it is natural to have then related by means of estimations. However a remark is in order. Notice that it is possible to construct a network whose slices have clustering coefficients zero for every node while the clustering coefficient of the projection network is $1$ for every node. Indeed, if we consider the example \ref{slide-ex}  as our multilevel network $M$, then it is straightforward to check that for a given node~$i$, its clustering coefficient in each of the slices, $c_q(i)$, is zero, and hence every convex combination of them, while the clustering coefficient $c_M(i)$ is $1$. This means that we cannot expect to estimate $c_M(i)$ from convex combinations of $\{c_q(i)\}_{q=1}^p$. Still it makes sense to give a {\sl ``slicewise''} definition for the clustering coefficient of the node~$i$ in a multilevel network $M$ and find estimations by means of the clustering already defined.  Again some notation is required. Call
\[
\mathcal{N}_q^*(i)=\{j\in X \, \vert \, j\  \text{is a neighbor of}\  i\  \text{in}\ S_q\}
\]
and
${G_q}(i)=(\mathcal{N}_q^*(i),E_q(i))$, where
\[
E_q(i)=\left\{\{k,j\}\in E_q \, \vert \, k,j\in \mathcal{N}_q^*(i)\right\}
\]
Note that ${G_q}(i)$ is the subgraph of the slice $S_q$ generated by $\mathcal{N}_q^*(i)$. Thus the following definition is proposed.

\begin{definition}\label{slicewisedefinition}
Let $M=(X,E,\mathcal{S})$ be a  multilevel network and $i$ be a given node of $M$. The \textit{slice clustering} coefficient of $i$ is defined as
\[
c_G^{sl}(i)=\frac{2\displaystyle\sum_{q=1}^p |E_q(i)|}{\displaystyle\sum_{q=1}^p|\mathcal{N}_q^*(i)|(|\mathcal{N}_q^*(i)|-1)}.
\]
\end{definition}

Notice that ${G_q}(i)$ is a subgraph of $\overline{G_q}(i)$ as was illustrated in the example above. Accordingly $\mathcal{N}_q^*(i)\subseteq\mathcal{N}_q(i)$ and hence the  largest possible number of links between neighbours of $i$ in the $q$-slice, cannot exceed the corresponding largest possible number of links between neighbours of $i$ in $\mathcal{N}_q(i)$. As above we have the relation
\[
\frac{|\mathcal{N}(i)|(|\mathcal{N}(i)|-1)}{2}\le \frac 12\sum_{q=1}^p
|\mathcal{N}_q^*(i)|(|\mathcal{N}_q^*(i)|-1)+\sum_{k<j}|\mathcal{N}_k^*(i)||\mathcal{N}_j^*(i)|,
\]
where the last sum has $\binom{p}{2}$ terms.
Also we can assume, after rearrangement, that $|\mathcal{N}_q^*(i)|\le|\mathcal{N}_{q+1}^*(i)|$ for all $1\le q\le p$.
Thus by mimicking the proof of Theorem \ref{multiclustering}, we can get a relationship between the clustering coefficient in the projected network $c_G(i)$ and the slice clustering coefficient $c_G^{sl}(i)$.

\begin{theorem}
Let $M=(X,E,\mathcal{S})$ be a  multilevel network and $i$ be a given node of $M$.
Call $A(i)=\{q\in\{1,\dots,p\} \, \vert \, |\mathcal{N}_q^*(i)|<2\}$ and
$\theta(i)=|A(i)|$. Then
\[
c_G^{sl}(i)\le p\,c_G(i)\left(1+(p-1)\left(4+ \frac {\theta(i)}{p-\theta(i)}\right)\right).
\]
\end{theorem}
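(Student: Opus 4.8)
The plan is to transcribe the proof of the upper bound in Theorem~\ref{multiclustering}, replacing throughout $\mathcal{N}_q(i)$ by $\mathcal{N}_q^*(i)$, $\overline{E_q}(i)$ by $E_q(i)$ and $\overline{G_q}(i)$ by $G_q(i)$. Two elementary observations make this legitimate. First, a neighbour of $i$ in the slice $S_q$ is in particular a neighbour of $i$ in $G$, so $\mathcal{N}_q^*(i)\subseteq\mathcal{N}(i)$ for every $q$. Second, since $E_q\subseteq E$, every edge of $E_q(i)$ joins two nodes of $\mathcal{N}(i)$, hence belongs to $\overline{E}(i)$, so $|E_q(i)|\le|\overline{E}(i)|$ for each $q$. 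This is strictly weaker than what was available in Theorem~\ref{multiclustering}, where in addition $\overline{E}(i)=\bigcup_q\overline{E_q}(i)$; the failure of that set equality here (Example~\ref{slide-ex}) is precisely why no lower bound is asserted in the slicewise setting. As in Theorem~\ref{multiclustering} I will assume the non-degenerate situation $\theta(i)<p$; if $\theta(i)=p$ all the relevant sums vanish and $c_G^{sl}(i)=0$ by the convention $a/0=0$.

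The argument then proceeds along the chain already used in the excerpt. Starting from the counting inequality stated just before the theorem, together with the harmless rearrangement $|\mathcal{N}_q^*(i)|\le|\mathcal{N}_{q+1}^*(i)|$, and using $|E_q(i)|\le|\overline{E}(i)|$ together with $2|\overline{E}(i)|=c_G(i)\,|\mathcal{N}(i)|(|\mathcal{N}(i)|-1)$, one obtains
\[
c_G^{sl}(i)\le\frac{2p|\overline{E}(i)|}{\sum_q|\mathcal{N}_q^*(i)|(|\mathcal{N}_q^*(i)|-1)}=p\,c_G(i)\,\frac{|\mathcal{N}(i)|(|\mathcal{N}(i)|-1)}{\sum_q|\mathcal{N}_q^*(i)|(|\mathcal{N}_q^*(i)|-1)}\le p\,c_G(i)\left(1+\frac{2\sum_{k<j}|\mathcal{N}_k^*(i)||\mathcal{N}_j^*(i)|}{\sum_q|\mathcal{N}_q^*(i)|(|\mathcal{N}_q^*(i)|-1)}\right),
\]
so it remains only to estimate the last fraction.

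For that I would invoke the two elementary inequalities used in the proof of Theorem~\ref{multiclustering}, now for the starred sets: the bound $|\mathcal{N}_k^*(i)||\mathcal{N}_j^*(i)|\le(|\mathcal{N}_k^*(i)|-1)|\mathcal{N}_j^*(i)|+|\mathcal{N}_j^*(i)|$, which after summing over the $\binom{p}{2}$ pairs contributes an additive $2(p-1)$ together with a numerator term $2(p-1)\sum_q|\mathcal{N}_q^*(i)|$; and $\sum_{k<j}|\mathcal{N}_j^*(i)|(|\mathcal{N}_k^*(i)|-1)\le\sum_q(q-1)(|\mathcal{N}_q^*(i)|-1)|\mathcal{N}_q^*(i)|\le(p-1)\sum_q(|\mathcal{N}_q^*(i)|-1)|\mathcal{N}_q^*(i)|$, which uses the ordering assumption. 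This leaves
\[
c_G^{sl}(i)\le p\,c_G(i)\left(1+2(p-1)+2(p-1)\,\frac{\sum_q|\mathcal{N}_q^*(i)|}{\sum_q|\mathcal{N}_q^*(i)|(|\mathcal{N}_q^*(i)|-1)}\right),
\]
and the proof finishes by the same $A(i)$-bookkeeping as before: terms $q\in A(i)$ contribute $0$ to the denominator and at most $1$ each to the numerator, whereas terms $q\notin A(i)$ have $|\mathcal{N}_q^*(i)|\ge2$, so that $|\mathcal{N}_q^*(i)|(|\mathcal{N}_q^*(i)|-1)\ge|\mathcal{N}_q^*(i)|$; hence the last fraction is at most $1+\theta(i)/(2(p-\theta(i)))$, and substituting gives exactly $p\,c_G(i)\left(1+(p-1)\left(4+\frac{\theta(i)}{p-\theta(i)}\right)\right)$.

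I do not expect a genuine obstacle here, since this is a faithful transcription of an argument already carried out in the paper; the only points deserving a moment's care are (i) verifying that the weaker containment $|E_q(i)|\le|\overline{E}(i)|$, in place of the set equality used for Theorem~\ref{multiclustering}, still suffices at every step of the upper-bound chain — it does, because only $\sum_q|E_q(i)|\le p|\overline{E}(i)|$ is ever invoked — and (ii) confirming that the counting inequality and the $|\mathcal{N}_q^*(i)|\ge2$ dichotomy, both phrased intrinsically in terms of the starred sets, remain valid when $\mathcal{N}_q(i)$ is shrunk to $\mathcal{N}_q^*(i)$.
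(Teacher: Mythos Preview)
Your proposal is correct and follows exactly the approach the paper intends: the paper's own proof of this theorem is the single sentence ``by mimicking the proof of Theorem~\ref{multiclustering}'', and you have carried out that mimicry faithfully, correctly noting that only the containment $E_q(i)\subseteq\overline{E}(i)$ (giving $\sum_q|E_q(i)|\le p|\overline{E}(i)|$) is needed for the upper bound, while the set equality $\overline{E}(i)=\bigcup_q\overline{E_q}(i)$ that drove the lower bound in Theorem~\ref{multiclustering} fails for the starred quantities.
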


%
Notice that there are other possible variations for the definition of the slice clustering coefficient of a node $i$. For example, the average over the clustering coefficients $c_q(i)$ of the slides:
\[
c_G^{\overline{sl}}(i)=\frac{1}{p}\sum_{q=1}^p c_q(i) \, .
\]

The relation between this definition and definition \ref{slicewisedefinition}
\[
p \min_{1 \leq q \leq p} \frac{|\mathcal{N}_q^*(i)|(|\mathcal{N}_q^*(i)|-1)}{2} \, c_G^{\overline{sl}}(i) \leq c_G^{sl}(i) \leq
p \max_{1 \leq q \leq p} \frac{|\mathcal{N}_q^*(i)|(|\mathcal{N}_q^*(i)|-1)}{2} \, c_G^{\overline{sl}}(i)
\]
is easily derived from the following lemma.

\begin{lemma}
If $a_1,a_2,\ldots,a_n \in [0,\infty)$ and $b_1,b_2,\ldots,b_n \in (0,\infty)$, then
\[
\dfrac{n \min_{1\leq i \leq n} b_i}{\sum_{i = 1}^n b_i} \left( \dfrac{1}{n} \sum_{i = 1}^n \dfrac{a_i}{b_i} \right) \leq \dfrac{\sum_{i = 1}^n a_i}{\sum_{i = 1}^n b_i} \leq \dfrac{n \max_{1\leq i \leq n} b_i}{\sum_{i = 1}^n b_i} \left( \dfrac{1}{n} \sum_{i = 1}^n \dfrac{a_i}{b_i}\right) \, .
\]
\end{lemma}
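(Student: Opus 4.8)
The plan is to prove each of the two inequalities by comparing the weighted average $\sum a_i / \sum b_i$ against the plain average $\frac1n\sum (a_i/b_i)$, exploiting that the weights $b_i/\sum_j b_j$ are bounded between $\min_i b_i / \sum_j b_j$ and $\max_i b_i / \sum_j b_j$. Concretely, for the right-hand inequality I would start from the trivial bound $b_i \le \max_{1\le k\le n} b_k$ for every $i$, which gives
\[
\sum_{i=1}^n a_i = \sum_{i=1}^n b_i\cdot\frac{a_i}{b_i} \le \bigl(\max_{1\le k\le n} b_k\bigr)\sum_{i=1}^n \frac{a_i}{b_i},
\]
where the inequality uses $a_i/b_i \ge 0$ (here I need $a_i \ge 0$ and $b_i > 0$). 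Dividing both sides by $\sum_{i=1}^n b_i > 0$ and multiplying and dividing the right-hand side by $n$ yields exactly
\[
\frac{\sum_{i=1}^n a_i}{\sum_{i=1}^n b_i} \le \frac{n\max_{1\le k\le n} b_k}{\sum_{i=1}^n b_i}\left(\frac1n\sum_{i=1}^n \frac{a_i}{b_i}\right),
\]
which is the claimed upper bound.

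For the left-hand inequality the argument is symmetric: from $b_i \ge \min_{1\le k\le n} b_k$ for every $i$ and $a_i/b_i \ge 0$ we get
\[
\sum_{i=1}^n a_i = \sum_{i=1}^n b_i\cdot\frac{a_i}{b_i} \ge \bigl(\min_{1\le k\le n} b_k\bigr)\sum_{i=1}^n \frac{a_i}{b_i},
\]
and dividing by $\sum_{i=1}^n b_i$ and inserting the factor $n/n$ on the right gives the lower bound. Both inequalities together are the statement of the lemma.

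This is essentially a one-line estimate once phrased correctly, so there is no serious obstacle; the only point requiring a little care is the hypothesis bookkeeping — one must use $a_i \ge 0$ to keep the direction of the inequality when replacing $b_i$ by its extreme value inside the sum $\sum_i b_i (a_i/b_i)$, and one must use $b_i > 0$ both to form the ratios $a_i/b_i$ and to divide by $\sum_i b_i$ without flipping the inequality. I would also remark that the two bracketing constants $n\min_i b_i/\sum_i b_i \le 1 \le n\max_i b_i/\sum_i b_i$ reduce to $1$ exactly when all $b_i$ are equal, recovering the identity $\sum a_i/\sum b_i = \frac1n\sum(a_i/b_i)$ in that case, and that the application to the clustering coefficients follows by taking $a_i = 2|E_i(i)|$ wait — more precisely $a_q = 2|E_q(i)|$ and $b_q = |\mathcal{N}_q^*(i)|(|\mathcal{N}_q^*(i)|-1)$, discarding those indices $q$ with $b_q = 0$.
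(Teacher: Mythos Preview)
Your argument is correct and is exactly the paper's approach: the paper simply records the identity $\sum a_i/\sum b_i = \sum_i (b_i/\sum_j b_j)(a_i/b_i)$ and declares the bounds immediate, while you spell out the same estimate by replacing each weight $b_i$ with $\max_k b_k$ or $\min_k b_k$. Your remarks on the hypotheses and on the application with $a_q = 2|E_q(i)|$, $b_q = |\mathcal{N}_q^*(i)|(|\mathcal{N}_q^*(i)|-1)$ are also in line with the paper.
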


\begin{proof}
It is straightforward since
\[
\frac {\sum_{i=1}^na_i}{\sum_{i=1}^nb_i}= \sum_{i=1}^n \left(\frac {b_i}{\sum_{j=1}^nb_j}\right)\frac {a_i}{b_i}.
\]
\end{proof}

The following examples illustrate the  behavior of the different concepts of clustering previously introduced. As it can be seen, if each one is separately evaluated one gets extremely different values.

\begin{example}
Consider
$M=(X,E,\mathcal{S})$ with $X = \lbrace 1,2,\ldots,n \rbrace$ and the slices
$S_1 = (\lbrace 1,2,3 \rbrace , \lbrace \lbrace 1,2 \rbrace, \lbrace 1,3 \rbrace, \lbrace 2,3\rbrace \rbrace)$  and $S_2 = ( \lbrace 1,2, \ldots ,n \rbrace, \lbrace \lbrace 1,2\rbrace,\lbrace 1,3\rbrace,\ldots,\lbrace 1,n\rbrace, \lbrace 2,3 \rbrace \rbrace )$. In this case
$$\displaystyle c_G^{\overline{sl}}(1) = \dfrac{1}{2}\left(\dfrac{1}{1} + \dfrac{1}{(n-1)(n-2)/2}\right) \sim \dfrac{1}{2}$$ while $$c_G^{	sl}(1) = \dfrac{1 + 1}{1 + (n-1)(n-2)/2} = O \left(\dfrac{1}{n^2}\right).$$ We see that the two local clustering coefficients  dramatically differ for $n \gg 1$.

On the other hand, consider $M=(X,E,\mathcal{S})$ with $X = \lbrace 1,2,\ldots,n^2 \rbrace$,
$\mathcal{S} = \lbrace S_1,S_2,\ldots,S_n \rbrace$, $S_j = (X,E_j)$, $E_1 = \lbrace \lbrace i,j \rbrace : 1 \leq i < j \leq n^2 \rbrace $, $E_j = \lbrace \lbrace 1,2 \rbrace, \ldots \lbrace 1,n \rbrace, \lbrace j,j + 1 \rbrace \rbrace$ if $2 \leq j \leq n - 1$, $E_n = \lbrace \lbrace 1,2 \rbrace, \ldots \lbrace 1,n \rbrace, \lbrace 2,n \rbrace \rbrace$. Then
$$c_G^{\overline{sl}}(1) = \dfrac{1}{n}\left(\dfrac{(n^2 - 1)(n^2 - 2)/2}{(n^2 - 1)(n^2 - 2)/2} + \dfrac{1}{(n-1)(n-2)/2} + \cdots + \dfrac{1}{(n-1)(n-2)/2}\right)$$
$$= O \left(\dfrac{1}{n^2}\right)$$
while
$$c_G^{sl}(1) = \dfrac{(n^2 - 1)(n^2 - 2)/2 + 1 + \ldots + 1}{(n^2 - 1)(n^2 - 2)/2 + (n-1)(n-2)/2 + \ldots + (n-1)(n-2)/2} \sim 1 \, .$$ Also in this case the two local clustering coefficients are very different  for $n \gg 1$.
\end{example}

\begin{example}
Consider $M=(X,E,\mathcal{S})$, where the slice $S_j = (X_j,E_j)$ is defined by $X_j = \lbrace 1,j+1,j+2 \rbrace$ and $E_j = \lbrace \lbrace 1,j+1 \rbrace,\lbrace 1,j+2 \rbrace,\lbrace j+1,j+2 \rbrace \rbrace$ for $1 \leq j \leq n-2$, $X_{n-1} = \lbrace 1,2,n \rbrace$, $E_{n-1} = \lbrace \lbrace 1,2\rbrace, \lbrace 1,n \rbrace, \lbrace 2,n \rbrace \rbrace$. Then $c_G^{\overline{sl}}(1) = 1$ while $c_G(1) = O (1/n)$.
\end{example}

\begin{example}
Now consider $M=(X,E,\mathcal{S})$ with $X_j = X = \lbrace 1,2, \ldots ,n \rbrace$ and $E_j = \lbrace \lbrace 1,k\rbrace : 2 \leq k \leq n\rbrace \cup \lbrace \lbrace j+1,k\rbrace : 2 \leq k \leq n, k \neq j+1\rbrace$ for $1 \leq j \leq n-1$. Then $c_G(1) = 1$ while $c_G^{\overline{sl}}(1) = O (1/n)$.
\end{example}

\begin{example}
Now consider $M=(X,E,\mathcal{S})$ with $X_j = X = \lbrace 1,2, \ldots ,n \rbrace$ and $E_j = \lbrace \lbrace 1,j+1\rbrace , \lbrace 1,j+2 \rbrace, \lbrace j+1,j+2 \rbrace \rbrace$ for $1 \leq j \leq n-2$, $E_{n-1} = \lbrace \lbrace 1,2 \rbrace , \lbrace 1,n \rbrace , \lbrace 2,n \rbrace \rbrace$. Then $c_G^{\overline{sl}}(1) = 1$ while $c_M(1) = O (1/n^2)$.
\end{example}

\begin{example}
Consider $M=(X,E,\mathcal{S})$ with $X_j = X = \lbrace 1,2, \ldots ,2n \rbrace $, $A = \lbrace \lbrace 1,k \rbrace : 2 \leq k \leq n \rbrace \cup \lbrace \lbrace k,l \rbrace : n+1 \leq k < l \leq 2n \rbrace$, $E_j = A \cup \lbrace \lbrace j+1,j+2 \rbrace \rbrace$ for $1 \leq j \leq n-2$, $E_{n-1} = A \cup \lbrace \lbrace n,2 \rbrace \rbrace $, $E_n = \lbrace \lbrace 1,k \rbrace : 2 \leq k \leq 2n \rbrace$. Then $c_M(1) \sim 1/4$ while $c_G^{\overline{sl}}(1) = O (1/n^2)$.
\end{example}

\begin{example}
Consider $M=(X,E,\mathcal{S})$ with $X_{i,j} = X = \lbrace 1,2, \ldots ,n \rbrace $ and $E_{i,j} = \lbrace \lbrace i,j \rbrace \rbrace$ for $1 \leq i < j \leq n$. Then $c_G^{\overline{sl}}(1) = 0$, $c_M(1) = O(1/n^2)$ and $c_G(1) = 1$.
\end{example}

\begin{example}
Consider $M=(X,E,\mathcal{S})$ with $X=\{1,2,\dots,n\}$, $X_{1}=\{1,2,3\}$, with $E_1=\{\{1,2\},\{1,3\},\{2,3\}\}$ and $X_j=\{1,j+2\}$ with $E_j=\{\{1,j+2\}\}$ for $j=2,\dots, n-2$.

Then $$c_G(1)=\frac{1}{(n-1)(n-2)}=O(1/n^2)$$
while  $$c_M(1)=\frac{1+0+\dots+0}{1+0+\dots+0}=1.$$
\end{example}

We conclude this section with a final observation. All what has been said refers to local clustering coefficients. The aggregation of the local quantities into a global one, or clustering of $M$, can naturally be considered. We left the details to the reader.

\subsection{Metric Structure: Distances and Efficiency}

If we want to introduce metric tools in a multilevel network $M=(X,E,\mathcal{S})$, we first have to give the notion of {\sl path} and {\sl length}. A {\sl path} $\omega$ in $M=(X,E,\mathcal{S})$ is a set of the form $\omega=\{(\ell_1,\cdots,\ell_q), (S_1,\cdots,S_q)\}$ such that
\begin{itemize}
 \item[{\it (i)}] $(\ell_1,\cdots,\ell_q)$ is a path in $(X,E)$,
 \item[{\it (ii)}] $(S_1,\cdots,S_q)$ is a sequence of slice graphs $S_1,\cdots,S_q\in \mathcal{S}$,
 \item[{\it (iii)}] For every $1\le j\le q$, we have that $\ell_j$ is an edge in the slice graph $S_j$.
\end{itemize}
By using this concept we can introduce a metric structure in a multilevel graph $M=(X,E,\mathcal{S})$ as follows.

\begin{definition}
Let $M=(X,E,\mathcal{S})$ be a multilevel network, $\beta\ge 0$ fixed and $\omega=\{(\ell_1,\cdots,\ell_q), (S_1,\cdots,S_q)\}$ be a path in $M$. The {\sl length} of $\omega$ is the nonnegative value
\[
 \ell(\omega)=q+\beta\sum_{j=2}^q \Delta(j),
\]
where
\[
\Delta(j)= \left\{
  \begin{array}{ll}
    1 & \mbox{if\enspace $S_j\ne S_{j-1}$,} \\
    0 & \mbox{otherwhise.}
  \end{array}
\right.
\]
The {\sl distance} in $M$ between two nodes  $i,j\in X$ is the minimal length among all possible paths in $M$ from $i$ to $j$.
\end{definition}

If we take $\beta=0$, the previous definition gives the natural metric in the projection graph, while if $\beta > 0$, we introduce new metrics that take into account, not only the global structure of the projection network, but also the interplay between the slice networks, that help to model the multi-scale nature of real-life social networks.

One way to calculate the distance matrix, that is a matrix $\Lambda=(\lambda_{ij})$ that contains in each position $\lambda_{ij}$ the distance in $M$ between vertices $i$ and $j$ in $X$, is to consider an auxiliary graph in the following way. Every vertex of the multilevel network $M$ is represented by a vertex in the auxiliary graph and, if a vertex in $M$ belongs to two or more slice graphs in $M$, then we duplicated it as many times as the number of slide graphs it belongs to. Every edge of $E$ is an edge in the auxiliary graph and there is one more (weighted) edge for each vertex duplication between the duplicated vertex and the original one. The distance between a duplicated vertex $j$ and another vertex $k$ is $d_{jk}=\min\{d_{j'k},d_{j''k}\}$, where $j'$ and $j''$ are the duplications of vertex $j$ in the auxiliary graph.

Once we have defined a distance in a multilevel network $M=(X,E,\mathcal{S})$, it is easy to generalize the concept of efficiency given in \cite{Criado} to this structure:

\begin{definition}
Let $M=(X,E,\mathcal{S})$ a multilevel network. The efficiency of $M$ is
$$
E(M)= \frac{1}{n(n-1)} \sum_{i,j \in X, i \ne j} \frac{1}{d_M(i,j)},
$$
where $n$ is the number of vertices in $X$ and $d_M(i,j)$ is the distance in $M$ between vertices $i$ and $j$.
\end{definition}

It is quite natural to try to establish comparisons between the efficiency $E(M)$ of a multilevel graph $M=(X,E,\mathcal{S})$ as it was just defined before and the efficiency  $E(G)$ of the underlying graph as a classic complex network, as it was introduced in \cite{latmar}. The actual analytical result that connects these two parameters are given in the following theorem.

\begin{theorem}
Let $M=(X,E,\mathcal{S})$ a multilevel network, and $G=(X,E)$. Then
\begin{equation}
\frac{1}{\beta+1} E(G)\leq E(M)\leq E(G),
\end{equation}
where $E(G)$ is the efficiency of the projection network $G$ and $E(M)$ is the efficiency of the multilevel network $M$.
\end{theorem}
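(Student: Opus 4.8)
The plan is to reduce everything to a single pointwise estimate relating the distance $d_M$ in the multilevel network to the distance in the projection graph $G$, and then feed it termwise into the definition of efficiency. Concretely, I will show that for every pair of distinct nodes $i,j\in X$,
\[
d_G(i,j)\ \le\ d_M(i,j)\ \le\ (\beta+1)\,d_G(i,j),
\]
with the usual convention that a distance is $+\infty$ when no connecting path exists and $1/\infty=0$ (matching the $a/0=0$ convention already adopted for the clustering coefficient). Once this chain of inequalities is in place, taking reciprocals gives $\frac{1}{\beta+1}\cdot\frac{1}{d_G(i,j)}\le\frac{1}{d_M(i,j)}\le\frac{1}{d_G(i,j)}$ for every ordered pair $i\ne j$, and summing over the $n(n-1)$ pairs and dividing by $n(n-1)$ yields the asserted bounds $\frac{1}{\beta+1}E(G)\le E(M)\le E(G)$.

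For the left inequality $d_G(i,j)\le d_M(i,j)$ I would use that any path $\omega=\{(\ell_1,\dots,\ell_q),(S_1,\dots,S_q)\}$ in $M$ carries, by condition $(i)$ of the definition of a path in $M$, an underlying path $(\ell_1,\dots,\ell_q)$ in $G$ of length $q$, while its multilevel length $\ell(\omega)=q+\beta\sum_{j=2}^q\Delta(j)$ is at least $q$ because $\beta\ge0$ and each $\Delta(j)\ge0$. Hence $d_G(i,j)\le q\le\ell(\omega)$, and taking the infimum over all $M$-paths from $i$ to $j$ gives the claim; in particular, if $i$ and $j$ are joined in $M$ they are joined in $G$, so the infinite-distance case is consistent. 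For the right inequality I would start from a geodesic $(\ell_1,\dots,\ell_q)$ in $G$ with $q=d_G(i,j)$ (assuming $d_G(i,j)<\infty$, the other case being trivial), and use $E=\bigcup_{r=1}^p E_r$ to choose, for each $k$, a slice $S_k\in\mathcal{S}$ with $\ell_k\in E_k$. The resulting $\omega=\{(\ell_1,\dots,\ell_q),(S_1,\dots,S_q)\}$ satisfies conditions $(i)$--$(iii)$ and has length $\ell(\omega)=q+\beta\sum_{k=2}^q\Delta(k)\le q+\beta(q-1)\le(\beta+1)q$, since $\sum_{k=2}^q\Delta(k)\le q-1$; therefore $d_M(i,j)\le(\beta+1)\,d_G(i,j)$.

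I do not expect a genuine obstacle here: the argument is essentially the observation that the projection sends $M$-paths to $G$-paths without increasing length, and conversely every $G$-path lifts to an $M$-path whose length is inflated by a factor of at most $\beta+1$. The only points that need a careful (but routine) check are that the lifted path really meets the three defining conditions of a path in $M$ --- $(i)$ because it was a $G$-path to begin with, $(ii)$ trivially, and $(iii)$ by the choice of slices --- together with the bookkeeping of disconnected pairs, which is absorbed by the conventions above. Note finally that for $\beta=0$ the lift has length exactly $q$, so $d_M=d_G$ and the inequality collapses to $E(M)=E(G)$, which shows the bounds cannot be improved uniformly in $\beta$.
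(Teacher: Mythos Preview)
Your proposal is correct and follows essentially the same route as the paper: both arguments establish the pointwise sandwich $d_G(i,j)\le d_M(i,j)\le(\beta+1)\,d_G(i,j)$ and then pass to reciprocals and average. The only cosmetic difference is that the paper phrases the upper bound by bounding $\ell_M(\omega)\le(\beta+1)\ell_G(\omega)$ for an arbitrary $M$-path and then minimizing, whereas you lift a $G$-geodesic explicitly; these are two sides of the same observation.
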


\begin{proof}
Let $i,j \in X$ and  $\omega$ be a path in $M$ from $i$ to $j$. We denote by $d_M(i,j)$ the distance between  $i,j$ in the multilevel network $M$ and by $d_G(i,j)$ the distance between $i,j$ in the network $G$. In a similar way, we denote by $\ell_M(\omega)$ its length in the multilevel network $M$ and by $\ell_G(\omega)$ its length in the network $G$.

It is clear that $d_M(i,j)\geq d_G(i,j)$. On the other hand, since $\Delta(\cdot)\le 1$ then
\[
\begin{split}
d_M(i,j)&\leq \ell_M(\omega)= \ell_G(\omega)+ \beta \sum_{j=2}^{\ell_G(\omega)}\Delta(j)\\
&\leq \ell_G(\omega) + \beta (\ell_G(\omega) -1)=\ell_G(\omega)(\beta+1)-\beta\le \ell_G(\omega)(\beta+1),
\end{split}
\]
Hence, since it is clear that $d_G(i,j)=\min\{\ell_G(\omega)|\omega$ is a path in $M$ from $i$ to $j$\}, we get that
\[
\frac{1}{\beta+1} d_M(i,j)\leq d_G(i,j)
\]
and therefore, by combining the last expressions with the reverse bound obtained before we obtain that
\[
\frac{1}{\beta+1}\frac{1}{d_G(i,j)} \leq \frac{1}{d_M(i,j)}\leq \frac{1}{d_G(i,j)},
\]
which gives the result.
\end{proof}

\begin{remark}
The last bounds are sharp simply by making $\beta\longrightarrow 0^+$.
\end{remark}

We are equally interested in obtaining estimations of the efficiency of a multilevel graph in terms of the efficiencies of its slices. Precisely we obtain the following result.

\begin{theorem}
If $M=(X,E,\mathcal{S})$ is a multilevel network such that $S=\{S_1,\ldots, S_p\}$ and for every $1\le q\le p$ we denote $S_q=(X_q,E_q)$, then
\begin{equation}
\frac{|X_q|(|X_q|-1)}{|X|(|X|-1)} E(S_q)\leq E(M),
\end{equation}
where $|X_q|$ is the number of nodes of $S_q$ and $|X|$ is the number of  nodes of $M$ .
\end{theorem}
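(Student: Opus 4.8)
The plan is to compare distances in $M$ with distances computed "inside" a single slice $S_q$, and then to split the efficiency sum of $M$ into the part coming from pairs of nodes in $X_q$ and the (nonnegative) remainder. The key observation is that any path in the slice $S_q$ is automatically a path in $M$ (take every edge in the sequence to live in the single slice $S_q$), and for such a path all the $\Delta(j)$ vanish, so its length in $M$ equals its length in $S_q$. Consequently, for $i,j\in X_q$ we have $d_M(i,j)\le d_{S_q}(i,j)$, hence $\tfrac{1}{d_M(i,j)}\ge \tfrac{1}{d_{S_q}(i,j)}$.

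First I would write $E(M)=\frac{1}{|X|(|X|-1)}\sum_{i\ne j\in X}\frac{1}{d_M(i,j)}$ and discard all terms except those with $i,j\in X_q$; since every summand is nonnegative this only decreases the sum, giving
\[
E(M)\ \ge\ \frac{1}{|X|(|X|-1)}\sum_{\substack{i,j\in X_q\\ i\ne j}}\frac{1}{d_M(i,j)}.
\]
Next, apply the slicewise distance bound $\tfrac{1}{d_M(i,j)}\ge \tfrac{1}{d_{S_q}(i,j)}$ termwise, and then recognize the resulting sum over $i\ne j$ in $X_q$ as $|X_q|(|X_q|-1)\,E(S_q)$ by the definition of efficiency applied to $S_q$. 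Pulling the constant out yields
\[
E(M)\ \ge\ \frac{|X_q|(|X_q|-1)}{|X|(|X|-1)}\,E(S_q),
\]
which is exactly the claimed inequality.

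The only point needing a little care — and what I'd flag as the main (mild) obstacle — is the interpretation of $d_M(i,j)$ when $i$ and $j$ lie in $X_q$ but some slice-internal shortest path in $S_q$ uses vertices not adjacent within $S_q$; here one simply notes that a path in $S_q$ is in particular a path in the projection graph $G$ (since $S_q$ is a subgraph of $G$) together with the constant slice-sequence $(S_q,\dots,S_q)$, so conditions (i)–(iii) in the definition of a path in $M$ are all met, and $\ell_M(\omega)=q=\ell_{S_q}(\omega)$ because every $\Delta(j)=0$. Also one should implicitly assume $|X_q|\ge 2$ (otherwise $E(S_q)$ and the prefactor are both naturally $0$ and the inequality is trivial). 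With these remarks the argument is just the two reductions above — restrict the sum, then bound each surviving term — and no genuine estimation is required.
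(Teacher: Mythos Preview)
Your proof is correct and is essentially the same as the paper's: both arguments rest on the observation that a path entirely contained in the slice $S_q$ is a path in $M$ with all $\Delta(j)=0$, hence $d_M(i,j)\le d_{S_q}(i,j)$ for $i,j\in X_q$, and then one restricts the efficiency sum for $M$ to pairs in $X_q$ and compares. The only cosmetic difference is that the paper starts from $E(S_q)$ and bounds it above, while you start from $E(M)$ and bound it below; the chain of inequalities is identical.
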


\begin{proof}
If we take $1\le q\le p$, since $X_q\subseteq X$ and $E_q\subseteq E$, then for every $i,j \in X_q$ and every path $\omega$ in $S_q$ from $i$ to $j$ in $S_q$, we get that $\ell_{S_q}(\omega)= \ell_M(\omega)\geq d_M(i,j)$. Then
$d_{S_q}(i,j) \geq d_M(i,j)$. Therefore
\[
\sum_{i,j \in X_q, i \ne j} \frac{1}{d_{S_q}(i,j)}\leq \sum_{i,j \in X_q, i \ne j} \frac{1}{d_{M}(i,j)}\leq \sum_{i,j \in X, i \ne j} \frac{1}{d_{M}(i,j)},
\]
which makes that
\[
\begin{split}
E(S_q)&=\frac{1}{|X_q|(|X_q|-1)}\sum_{i,j \in X_q, i \ne j} \frac{1}{d_{S_q}(i,j)}\\
&\leq\frac{1}{|X_q|(|X_q|-1)} \sum_{i,j \in X, i \ne j} \frac{1}{d_{M}(i,j)}=\frac{|X|(|X|-1)}{|X_q|(|X_q|-1)}E(M).
\end{split}
\]
\end{proof}

\begin{remark}
There is a reverse inequality that can be established when $M=(X,E,\mathcal{S})$ is an {\sl homogeneous multilevel network} (i.e., a multilevel network such that $X_q=X$ for every layer $S_q=(X_q,E_q)\in S$). Indeed, in this case the following relation is easily obtained
\begin{equation}
 E(S_q)\leq E(M)\leq E(G)\leq (|X|-1)E(S_q).
\end{equation}
\end{remark}

\section{Some models for multilevel random networks}

In this final section we present some models to produce multilevel random networks, inspired by the Barab\'asi-Albert preferential
attachment model of complex networks \cite{Albert1} and several bipartite networks models such as the collaboration network model
proposed by Ramasco {\em et al.} \cite{ramasco} or the sublineal preferential attachment bipartite model introduced by M.\,Peltom\"{a}ki and M.\,Alava \cite{peltomaki}. We propose growing random models since many of the real-life examples that can be modeled by multilevel networks, such as social or transportation networks, are dynamic models and their structure grow in time by addition of new nodes, links and layers. The presented randomized models are determined by the following rules:
\begin{itemize}
 \item[{\it (i)}] {\em Model parameters}. Our models have three main    parameters: $N$, $m$ and $p_{new}$. We set $N\in \N$ as the number of nodes in the multilevel network while $2\le m\le N$ account of  the number of nodes in each layer ({\em i.e.} if we take $m=2$, we recover the Barab\'asi-Albert model \cite{Albert1}). In our model $m$ will be fixed, but it can also be replaced by any other non-negative integer random variable if we want to produce more general models. Finally, we set $p_{new}\in(0,1]$ as the probability of joining a new node to the multilevel network.

\item[{\it (ii)}] {\em Initial conditions}. We start with a seed multilevel network made of one single layer $S_0$ of $m$ nodes that are linked all to all, ({\em i.e.} $S_0=K_m$). We can replace the {\sl all-to-all} structure by any other structure (such as a scale free or a Erd\H{o}s-R\'enyi network), but the results obtained are statistically equivalent.

 \item[{\it (iii)}] {\em Layer composition}. At each time step $t$, a new layer $S_t$ of $m$ nodes is added to the multilevel network. In order to determine $S_t$ we have to give its nodes and links. We start by choosing randomly an existing node of the multilevel network proportionally to its degree ({\em preferential election}). Therefore if at step $t-1$, the set of nodes of the multilevel network is $\{v_1,\ldots,v_n\}$, and $k_i$ denotes the degree of node $v_i$ at time $t-1$ in the projection network, then we choose the node $v_i$ randomly and independently with probability
     \[
     p_i=\frac {k_i}{\sum_{j=1}^{n} k_j}.
     \]
     The chosen node will be the first node of the new layer $S_t$ and we call it {\sl coordinator} of the layer. As an alternative, we could choose the coordinator proportionally to the number of layers that it belongs to but, in this case, the model will be similar to the collaboration network model proposed by Ramasco {em et al.} \cite{ramasco}. Each of the remaining $m-1$ nodes of $S_t$ will be a new node with probability $p_{new}$ and an existing node with probability $(1-p_{new})$. If we have to add an already existing node, we will uniformly and independently choose it at random. Note that we can replace the uniform random selection by other random procedure (such as preferential selection), but the random tests done suggest that the multilevel network obtained have statistically the same structural properties when $N$ is large enough ($N>10^3$). Anyway, we should have chosen $m$ nodes $\tilde v_1, \ldots, \tilde v_m$ ($\tilde v_1$ is the coordinator node) that will belong to the new layer $S_t$.

 \item[{\it (iv)}] {\em Layer inner-structure}. Once we have fixed the nodes $\tilde v_1, \ldots, \tilde v_m$ of the new layer $S_t$, we have to give its links. First, we link all the nodes to the coordinator node in order to warrant that the new layer is connected. In addition to this, we set new links between each pair of nodes, say $v_i$ and $v_j$ (with $1<i \ne j\leq m$) by using a random linking probability $p_{ij}$ that we will present later. At the end of this step we have defined completely the new layer $S_t$.

 \item[{\it (v)}] Finally, we repeat steps {\it (iii)} and {\it (iv)} until the number of nodes of the multilevel network is at least $N$.
\end{itemize}

These rules define a family of growing models for multilevel random networks that cover a wide range of different networks with different meso-scale structure, simply by changing the {\sl linking strategy}, $p_{ij}$, at each step. The simplest choice for this strategy is $p_{ij}=1$, {\sl i.e.} an all-to all strategy, thus linking all the nodes in the new slice. In this way, we recover a model similar to the Ramasco {\em et al.} model \cite{ramasco}, that can be used to produce hyper-networks with a meso-scale structure similar to that found in scientific collaboration systems.

In this paper we will focus on the following three different linking strategies:

\begin{enumerate}
 \item {\em Erd\H{o}s-R\'enyi type strategy ({\sl Model I})}. We fix a value $p_{ij}=p_{link}\in [0,1]$ so that we add each link $\{\tilde v_i,\tilde v_j\}$ randomly and independently with probability $p_{link}$, for every $2\le i\ne j\le m$. Note that this strategy does not take into account the meso-scale structure of the multilevel network, since the existence of the link $\{\tilde v_i,\tilde v_j\}$ is independent of the nodes $\tilde v_i,\tilde v_j$ in other slices of the multilevel network. Clearly, this is a {\sl toy model} inspired by the classic Erd\H{o}s-R\'enyi model.

\item {\em Assortative linking strategy ({\sl Model II})}. For every $2\le i\ne j\le m$, we add randomly the link $\{\tilde v_i,\tilde v_j\}$ proportionally to the number of common slices that hold simultaneously $\tilde v_i$ and $\tilde v_j$. Hence if we denote by $Q_{ij}$ the number of slices that hold simultaneously $\tilde v_i$ and $\tilde v_j$ at time step $t$ (including $S_t$) and by $q_i$ the number of slices that hold $\tilde v_i$ at time step $t$ (also including $S_t$), thus the probability of linking node $\tilde v_i$ with node $\tilde v_i$ is given by
     \[
     p_{ij}=\frac {2Q_{ij}}{q_i+q_j},
     \]
     for every $2\le i\ne j\le m$. The heuristic behind this strategy comes from social networks, since in this type of networks the relationships in a new social group are correlated with the previous relationships between the actors in other social groups \cite{Wasserman}. Hence, on the one hand, if two actors that belong to the new social group coincide in many (previous) social groups, then the probability of linking in this new social group is big and ,on the other hand, when two new actors join their first social group, the probability of establishing a relationship between them it is also high.

\begin{figure}[t!]
\begin{center}
\includegraphics[width=0.8\textwidth]{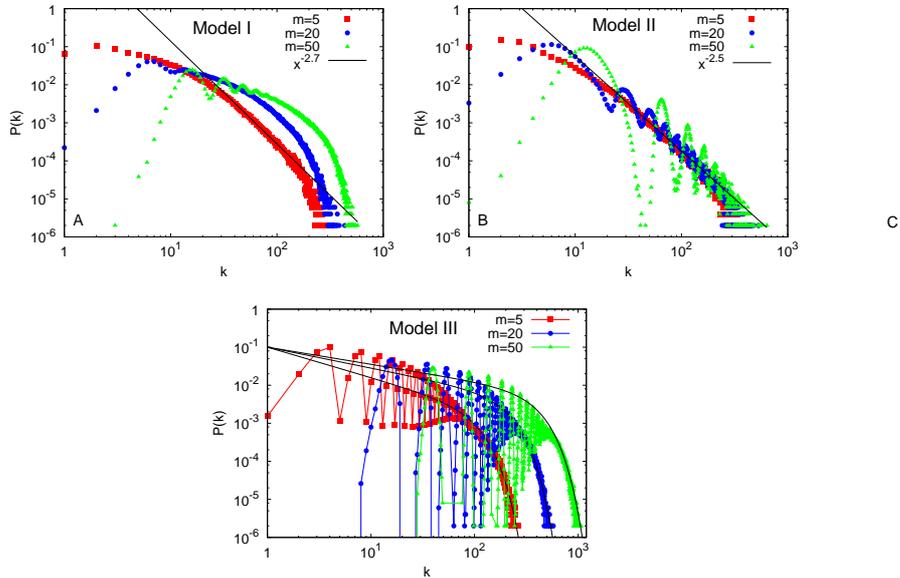}
\end{center}
\caption{Random testing of the degree distribution of the projection graph in the randomized multilevel models. Model I on the left, the model II is centered and Model III is on the right, each of them for $m=5,\,20,\,50$ nodes per slice. The distribution of each type of network and size of slice is computed for 100 random networks of 5000 nodes each one.}
\label{F:dist_grado}
\end{figure}

 \item {\em Disassortative linking strategy ({\sl Model III})}. If we take $2\le i\ne j\le m$, the probability of linking $\tilde v_i$ with $\tilde v_j$ in the slice $S_t$ is inversely proportional to the number of slices that hold simultaneously $\tilde v_i$ and $\tilde v_j$, i.e. if we denote by $p_{ij}$ such probability, then
     \[
     p_{ij}=1-\frac {2Q_{ij}}{q_i+q_j},
     \]
     by using the same notation as in the assortative strategy ({\sl Model II}). Now, the heuristic inspiring this strategy lies in some transportation networks, such as the airline networks, where the links in a new line try to connect nodes that are not connected in previous lines. By using the last expression, the newcomer nodes of the slice will be linked to old nodes that belong to many slices with high probability, while the newcomers will not be linked between them also with high probability. This strategy prevents the multiple reiteration of the same link in different slices and stresses the dissasortativity between newcomer nodes and nodes that belong to many slices.
\end{enumerate}
Once we have introduced the three linking strategies, we now present some numerical results to shows that, despite the fact that the above tree rules produce hyper-graphs with similar projection networks, the structures found in the mesoscale can be quite different.

\begin{figure}[t!]
\begin{center}
\includegraphics[width=0.7\textwidth]{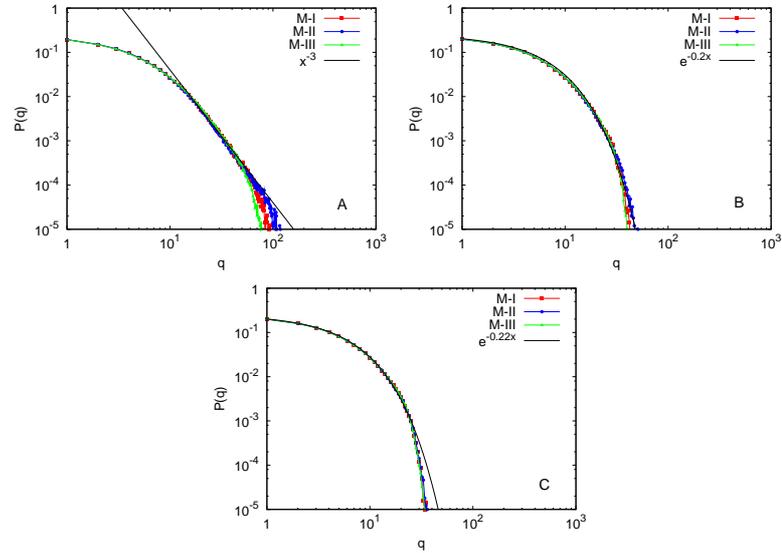}
\end{center}
\caption{Random testing of the hyper-degree distribution of the multilevel models (in $\log$ scale) for different number of nodes
  $m$ in each slice for 100 random networks of 5000 nodes each one. The case $m=5$ is on the left, $m=20$ is centered and $m=50$ is
  on the right, combined with some numerical approximation in each case.}
\label{F:dist_hiper}
\end{figure}

Let us start by showing in figure \ref{F:dist_grado} the degree distribution, $P(k)$, of the projection network.  It is clear that the degree distribution of the projection network obtained in three models display heavy-tailed profiles, and this fact is more relevant if the number of nodes of each slice is low, since in this case the Big Number Law's make that the degree distribution behaves as a power-law, $P(k)\sim k^{-\gamma}$ with $2\le \gamma\le 3$. Note that a discretizacion effect appears in all the distribution presented in figure~\ref{F:dist_grado} that makes that the degree distribution are oscillatory. This is due to the fact that each slice has the same number of nodes, and all of them are linked to the coordinator node, which makes that the degree of the nodes are concentrated in some values modulus $m$. In order to avoid this effect, we could modify step {\it (i)} replacing $m$ by a non-negative integer random variable that prevents that the degree of each node increases modulo $m$. This discretization effect is present in the three models and increases when the size of the slices is large, but it is more persistent in Model III, since in this case the dissasortative character of the model stresses this phenomenon.

In addition to the degree distribution of the projection network, in figure~\ref{F:dist_hiper} we show the distribution $P(q)$ of the number of slices that hold each node (we call it {\sl hyper-degree   distribution} for consistency with the hyper-graph theory notation). Obviously, from this figure we observe that the hyper-degree distribution is only depending on the number of nodes $m$ of each slide and it is not correlated with the linking model used to construct the network. On the other hand, the behavior of $P(q)$
depends quite strongly on the particular value of $m$. In particular, while $P(q)$ can be approximated by a truncated power law for $m=5$ (figure \ref{F:dist_hiper}.A), it presents an exponential decay in the cases $m=20$ and $m=50$ (figures \ref{F:dist_hiper}.B and \ref{F:dist_hiper}.C).

\begin{figure}[t!]
\begin{center}
\includegraphics[width=\textwidth]{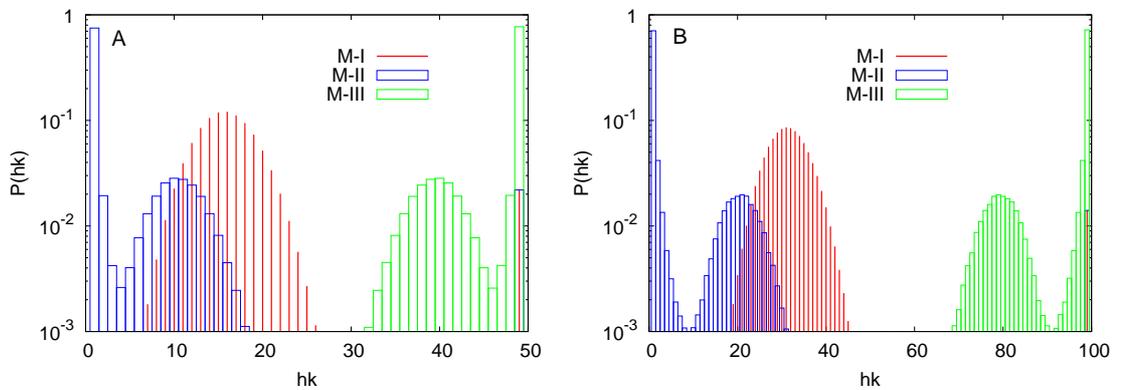}
\end{center}
\caption{Random testing of the average degree distribution of each slice for 100 random networks of 5000 nodes each one in the
  randomized multilevel models I, II and III. The case $m=50$ is in panel $A$ (on the left), while the case $m=100$ is in panel $B$ (on the right), both for models I, II and III.}
\label{F:histo}
\end{figure}

Figures \ref{F:dist_grado} and \ref{F:dist_hiper} has shown that either the structure of the projection network and hyper-degree
distribution are quite similar for the three random models presented. However, they are very different if we analyze their mesoscale structure, as figure \ref{F:histo} shows. This figure shows, for $m=50$ and $m=100$, the probability of finding a node connected to
$hk$ nodes within a randomly chosen slice of the network. It is straightforward to check that since the Erd\H{o}s-R\'enyi type
strategy is used in Model~I, then the resulting distribution $P(hk)$ is a binomial distribution with parameters $m$ and $p_{link}$
(asymptotically a Poisson distribution). However, note that in figure~\ref{F:dist_grado} we found that the degree distribution of the
projection network is scale-free. Correspondingly, Models~II and III display totally different mesoscale properties as shown by their
corresponding distributions $P(hk)$. On one hand in Model~II we observe that $P(hk)$ tend to accumulate around the maximum possible
value of $hk$, $m$, due to the assortative character of the linking strategy used. Conversely, in Model~III the association between the nodes belonging to the same slice is much lower due to the disassortativity of the model. Again, while both Models~II and III
gives two different bimodal averaged distributions $P(hk)$, the corresponding distribution for their projection networks are also of
scale-free type. Therefore, these results illustrate that many different multilevel networks (with different mesoscale structures)
can produce similar projection networks and therefore that we should take into account the mesoscale structure in order to give sharper models of many real-life problems.

\section*{Acknowledgements}

The authors of this work have been partially supported by the Spanish Government Project MTM2009-13848.

%
%



\end{document}